\newif\ifDRAFT 
\theoremstyle{plain}
\newtheorem{theorem}{Theorem}[section]
\newtheorem{lemma}[theorem]{Lemma}
\newtheorem{claim}[theorem]{Claim}
\newtheorem{fact}[theorem]{Fact}
\theoremstyle{definition}
\newtheorem{remark}[theorem]{Remark}
\crefname{equation}{Eqn.}{Eqns.}
\newcommand{\highlight}[1]{\textit{\textbf{#1}}}
\DeclareSymbolFont{bbold}{U}{bbold}{m}{n}
\DeclareSymbolFontAlphabet{\mathbbold}{bbold}
\newcommand{\IGNORE}[1]{}
\newcommand{\poly}{\operatorname{poly}}
\newcommand{\Decomp}{\mathsf{Decomp}}
\newcommand{\kbnote}[1]{{\color{blue}[Koustav: #1]}}
\title{Near-Optimal Vertex Fault-Tolerant Labels for Steiner Connectivity}
    \author{
    Koustav Bhanja\thanks{Weizmann Institute. Email: \texttt{koustav.bhanja@weizmann.ac.il}. Supported by Merav Parter's European Research Council (ERC) grant under the European Union’s Horizon 2020 research and
    innovation programme, grant agreement No. 949083.}
    \and
    Asaf Petruschka\thanks{Weizmann Institute. Email: \texttt{asaf.petruschka@weizmann.ac.il}.  Supported by an Azrieli Foundation fellowship, and by Merav Parter's European Research Council (ERC) grant under the European Union’s Horizon 2020 research and
    innovation programme, grant agreement No. 949083.}
    }
\date{}
\begin{document}

\maketitle
\pagenumbering{roman}

\begin{abstract} 

We present a compact labeling scheme for determining whether a designated set of terminals in a graph remains connected after any $f$ (or less) vertex failures occur.
An \emph{$f$-FT Steiner connectivity labeling scheme} for an $n$-vertex graph $G=(V,E)$ with terminal set $U \subseteq V$  provides labels to the vertices of $G$, such that given only the labels of any subset $F \subseteq V$ with $|F| \leq f$, one can determine if $U$ remains connected in $G-F$.
The main complexity measure is the maximum label length.

The special case $U=V$ of global connectivity has been recently studied by Jiang, Parter, and Petruschka~\cite{JiangPP25}, who provided labels of $n^{1-1/f} \cdot \poly(f,\log n)$ bits.
This is near-optimal (up to $\poly(f,\log n)$ factors) by a lower bound of Long, Pettie and Saranurak~\cite{LongPS25}.
Our scheme achieves labels of $|U|^{1-1/f} \cdot \poly(f, \log n)$ for general $U \subseteq V$, which is near-optimal for any given size $|U|$ of the terminal set.
To handle terminal sets, our approach differs from~\cite{JiangPP25}.
We use a well-structured Steiner tree for $U$ produced by a decomposition theorem of Duan and Pettie~\cite{DuanP20},
and bypass the need for Nagamochi-Ibaraki sparsification~\cite{NagamochiI92}.

\end{abstract}

        
        


\IGNORE{
\kbnote{An attempt:}
Let $G=(V,E)$ be an undirected graph on $n=|V|$ vertices and $U\subseteq V$ be a given set of terminal vertices in $G$. 
Steiner connectivity is a fundamental concept that includes global connectivity ($|U|=n$), as well as $(s,t)$-connectivity ($|U|=2$), as just a special case. 
In this paper, we provide a compact labeling scheme for Steiner connectivity that can handle the failure of vertices.
A $f$-vertex fault-tolerant ($f$-VFT) labeling scheme for Steiner connectivity is to assign every vertex with a small label such that, after failure of any set $F\subseteq V$ containing at most $f$ vertices, we can determine whether terminal set $U$ is disconnected in $G-F$ by using only the labels of vertices in $F$. 

There exist several compact $f$-VFT labeling schemes for both global connectivity [References] and (s,t)-connectivity [References]. Unfortunately, no compact $f$-VFT labeling scheme for Steiner connectivity exist till date for any value of $f\ge 2$. Therefore, to address this gap, we present the following first results on Steiner connectivity for any given terminal set $U$ satisfying $2\le |U|\le n$.   

\subparagraph*{$f$-VFT Labeling Scheme for Steiner Connectivity:} There is a labeling scheme that for every vertex $v\in V$, assign a label $L(v)$ occupying $|U|^{1-\frac{1}{f}}.poly(f,\log{n})$ bits of space with the following property. For every $F\subseteq V$ satisfying $|F|\le f$, we can decide if the failure of all vertices in $F$ disconnects $U$ by only inspecting the $L(.)$ labels of $F$.  \\

\noindent
There is a lower bound of $\Omega(|U|^{1-\frac{1}{f}}/f)$ bits of space on the label size in the worst case, which makes our result optimal upto polynomial factors of $f$.
Moreover, our result achieve the same bound on the label size of the existing best-known results for the two extreme scenarios -- global and $(s,t)$-connectivity.
}
\pagenumbering{arabic}

\section{Introduction}

Labeling schemes are distributed graph data structures with many applications in graph algorithms and distributed computing.
The goal is to assign the vertices (and/or edges)
of the graph with succinct yet informative labels, so that queries of interest can be answered from only the labels of the query arguments, without any access to the graph or to any other centralized storage.
The main complexity measure is the \emph{label length}, which is the maximum number of bits stored in a single label; construction and query time are secondary measures.
Diverse types of labeling have been studied, with notable examples being adjacency~\cite{Breuer66,BreuerF67,KannanNR92,AlstrupDK17,AlstrupKTZ19},
ancestry and least common ancestors in trees~\cite{AbiteboulAKMR06,AlstrupHL14},
exact or approximate distances~\cite{ThorupZ05,GavoillePPR04,AlstrupGHP16,BonamyGP22,GawrychowskiU23}, 
and pairwise edge- or vertex-connectivity~\cite{KatzKKP04,HsuL09,IzsakN12,PettieSY22}.

The error-prone nature of many real-life networks has further motivated researchers to consider labeling schemes designed to support events of failures in the graph.
Courcelle and Twig~\cite{CourcelleT07} were the first to explicitly define and study \emph{fault-tolerant (FT)} (a.k.a \emph{forbidden set}) labeling schemes, where the goal is to answer queries subject to the failure (i.e., deletion) of a set $F$ of vertices or edges in the graph.
Here, the set $F$ is part of the query arguments, so the labels assigned to $F$ are available to the query algorithm.
Usually, $|F|$ is assumed to be bounded by some integer parameter $f \geq 1$ (this is abbreviated to the $f$-FT model).
A lot of the works on FT labeling studied special graph families, such as planar or bounded doubling-dimension graphs~\cite{CourcelleT07,CourcelleGKT08,AbrahamCG12,AbrahamCGP16,Bar-NatanCGMW22,BonehCGMW25}.
In this work, we consider a general $n$-vertex (undirected) graph $G = (V,E)$ and focus on connectivity-related $f$-FT labeling schemes.

\paragraph{$s$-$t$ connectivity.}
The ``$s$-$t$ variant'' of this problem has been extensively studied in recent years.
Here, a query consists of a triplet $(s,t,F)$, and it is required to determine if $s,t$ are connected in $G-F$.
The study of $f$-FT $s$-$t$ connectivity labeling was initiated by Dory and Parter~\cite{DoryP21} with subsequent works in~\cite{IzumiEWM23,ParterP22a,ParterPP24,LongPS25}, rendering it reasonably well-understood by now: it is known to admit very succinct labels of only $\poly(f, \log n)$ bits, either with edge failures or with vertex failures; we refer to~\cite{LongPS25} for a detailed presentation of the state-of-the-art bounds.

\paragraph{Global connectivity.}
Parter, Petruschka and Pettie~\cite{ParterPP24} have proposed the ``global connectivity'' variant, where the query consists only of $F$, and it is required to determine if $G-F$ is a connected graph.
Unlike with $s$-$t$ connectivity, the global variant turns out to behave very differently with edge failures ($F \subseteq E$) or with vertex failures ($F \subseteq V)$.
Essentially all known labeling schemes for the $s$-$t$ connectivity variant under edge failures~\cite{DoryP21,IzumiEWM23,LongPS25} also solve the global variant en route, hence the latter also admits succinct $\poly(f,\log n)$-bit labels.
However, in case of vertex failures, it was shown in~\cite{ParterPP24} that polynomial dependency in $f$ is impossible, and their lower bound was improved to $\Omega(n^{1-1/f} /f)$ by Long, Pettie and Saranurak~\cite{LongPS25}.
Very recently, Jiang, Parter and Petruschka~\cite{JiangPP25} gave a nearly matching upper bound by providing a construction of $f$-FT global connectivity labels of $n^{1-1/f} \cdot \poly(f, \log n)$ bits, essentially settling the problem.

\paragraph{Steiner connectivity.}
In this work, we study the natural generalization of the problem to \emph{Steiner connectivity}.
In this variant, the graph $G = (V,E)$ comes with a (fixed) subset of designated vertices $U \subseteq V$ called \emph{terminals}.
Given a query $F \subseteq V \cup E$ of up to $f$ faulty vertices/edges, it is required to report if $F$ is a \emph{$U$-Steiner cut} in $G$, i.e., if there is a pair of disconnected terminals in $G-F$.
Note that $U$ is fixed and not a part of the query, so we only get to see the labels of $F$.%
\footnote{
If we also get the labels of $U$, we can just use the $\poly(f,\log n)$-bit labels of the $s$-$t$ variant, and apply $|U|-1$ pairwise connectivity queries (between one terminal in $U$ to all the others) in $G-F$.
}
We call this variant \emph{$f$-FT Steiner connectivity labeling scheme}.

We first note that the lower bound for the global connectivity variant of~\cite{LongPS25} immediately yields a lower bound for the Steiner setting as well:%
\footnote{Construct the lower bound instance of~\cite{LongPS25} on $k$ vertices set as terminals, and add $n-k$ isolated vertices.}
\begin{theorem}[\cite{LongPS25}]
    Any $f$-FT Steiner connectivity labeling schemes for $n$-vertex graphs with $k$ terminals must have label length of $\Omega(k^{1-1/f}/f)$ bits (in the worst case). 
\end{theorem}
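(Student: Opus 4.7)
The plan is a straightforward black-box reduction from the global connectivity lower bound of Long, Pettie and Saranurak~\cite{LongPS25}, which states that any $f$-FT global connectivity labeling scheme on $k$-vertex graphs requires labels of length $\Omega(k^{1-1/f}/f)$ bits in the worst case. Let $H$ be a $k$-vertex graph witnessing this bound. I would construct the Steiner instance by letting $G$ consist of $H$ together with $n-k$ additional isolated vertices, and declaring the terminal set to be $U := V(H)$.

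The core claim to verify is the following correspondence: for every $F \subseteq V(G)$ with $|F| \le f$, the terminal set $U$ is connected in $G - F$ if and only if $H - (F \cap V(H))$ is connected. This is immediate: the $n-k$ added vertices are isolated in $G$, so they remain isolated (and singleton components) in $G-F$, and since they are not terminals they can neither help nor harm $U$-connectivity; meanwhile the subgraph of $G-F$ induced on $V(H) \setminus F$ is exactly $H - (F \cap V(H))$. Consequently, given any $f$-FT Steiner connectivity labeling $\mathcal{L}$ for $(G,U)$, restricting $\mathcal{L}$ to the vertices of $V(H)$ and answering global connectivity queries on $H$ by invoking the Steiner decoder on the same faulty set yields a valid $f$-FT global connectivity labeling for $H$ of identical per-vertex length. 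The~\cite{LongPS25} lower bound applied to $H$ then immediately gives the desired $\Omega(k^{1-1/f}/f)$ bit lower bound.

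There is essentially no obstacle here: the only thing that must be carefully checked is the inertness of the added isolated non-terminals with respect to Steiner connectivity queries, and all the actual hardness content is inherited from~\cite{LongPS25}.
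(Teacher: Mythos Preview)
Your proposal is correct and matches the paper's own argument exactly: the paper justifies this theorem via the same one-line reduction (a footnote instructing to take the~\cite{LongPS25} hard instance on $k$ vertices as the terminal set and pad with $n-k$ isolated non-terminals). Your verification that the isolated non-terminals are inert and that the restricted labels solve global connectivity on $H$ is precisely the intended reasoning.
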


Our technical contribution is providing a nearly matching upper bound which is optimal up to $\poly(f,\log n)$ factors, hence generalizing the result of~\cite{JiangPP25} (recovered when $U=V$):
\begin{theorem}\label{thm:main}
    For every $n$-vertex graph $G = (V,E)$ with terminal set $U \subseteq V$ and $f \geq 1$,
    there is an $f$-FT Steiner connectivity labeling scheme for $(G,U)$ with labels of $|U|^{1-1/f} \cdot \poly(f, \log n)$ bits.
    The labels are computed deterministically in polynomial time.
\end{theorem}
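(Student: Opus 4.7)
The plan is to follow the framework of Jiang, Parter, and Petruschka~\cite{JiangPP25} for global $f$-FT connectivity --- which yields labels of $n^{1-1/f}\cdot \poly(f,\log n)$ bits --- and adapt it to the Steiner setting. The key substitution is to replace the Nagamochi-Ibaraki sparsifier of~\cite{JiangPP25} (which ties the label length to $n$) with a well-structured Steiner tree $T$ obtained from the Duan-Pettie decomposition~\cite{DuanP20} (which will tie it to $k := |U|$ instead).

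As the first step, I would apply the Duan-Pettie decomposition to $(G,U)$ to extract a Steiner tree $T$ for $U$ with two critical properties: \emph{(i) faithfulness}, that for every $F$ with $|F|\leq f$, the Steiner connectivity of $U$ in $G-F$ is determined by the interaction of $F$ with $T$ (together with a small amount of per-vertex auxiliary information), and \emph{(ii) balanced decomposability}, that $T$ admits a hierarchical partition into sub-Steiner-trees whose terminal counts shrink geometrically at a rate compatible with the recursion below. The Nagamochi-Ibaraki sparsifier preserves all $\leq f$-cuts of $G$ but has total size $\Theta(fn)$, which is precisely what drove the $n^{1-1/f}$ bound in~\cite{JiangPP25}; using $T$ instead reduces the ``effective universe'' from $V$ to $U$.

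Next, I would construct the labels by induction on $f$, mirroring the JPP25 recursion applied to $T$. At the top level, $T$ is partitioned into $\Theta(k^{1/f})$ pieces of at most $k^{1-1/f}$ terminals each, arranged so that every Steiner $\leq f$-cut is witnessed by the pieces together with the skeleton linking them. Each vertex $v$ stores a $\poly(f,\log n)$-bit header describing its role in this decomposition, plus, for each piece relevant to it, the recursively built $(f-1)$-FT Steiner-connectivity label with respect to that piece's terminal set. Letting $L(k,f)$ denote the label length, the recursion $L(k,f) \leq \Theta(k^{1/f}) \cdot L(k^{1-1/f}, f-1) + \poly(f,\log n)$ unrolls to $L(k,f) = k^{1-1/f}\cdot \poly(f,\log n)$. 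The query algorithm uses the headers to localize $F$ within $T$, certifies Steiner connectivity across fault-free pieces directly, and recurses on the fault-containing pieces with budget $f-1$; the base case $f=1$ reduces to detecting Steiner cut-vertices and is handled by $O(\log n)$-bit labels derived from a block-cut tree of $T$.

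The main obstacle will be securing properties (i) and (ii) \emph{simultaneously}. Faithfulness alone follows from any Steiner-cut-preserving sparsifier; balanced decomposability alone is a purely tree-structural property; the challenge is that typical Steiner-cut sparsifiers may yield arbitrarily unbalanced trees, while the usual ``balanced'' spanning trees (e.g.\ centroid decompositions) need not capture small Steiner cuts. The structural theorem of~\cite{DuanP20} should deliver a Steiner tree --- essentially the Steiner analogue of Nagamochi-Ibaraki --- that achieves both at once, after which the JPP25 recursion composes cleanly with $n$ replaced by $k = |U|$ throughout, yielding the claimed $|U|^{1-1/f} \cdot \poly(f,\log n)$-bit labels.
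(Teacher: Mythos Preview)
Your proposal correctly names the Duan--Pettie decomposition as the key ingredient, but the properties you ascribe to it, and the architecture you build on them, do not match the paper and do not obviously work. Property~(i) (``faithfulness'') is simply false: a Steiner tree $T$ is a single tree and cannot determine the Steiner connectivity of $U$ in $G-F$ for arbitrary $F$; there is no ``Steiner analogue of Nagamochi--Ibaraki'' here. The paper never works inside $T$ for connectivity queries. All connectivity tests in $G-F$ are performed via the black-box $f$-FT $s$-$t$ labels of~\cite{LongPS25} (\Cref{thm:st-labels}), which your proposal never invokes; the role of $T$ is only to decide \emph{which} of these $\poly(f,\log n)$-bit labels each vertex should store. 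Property~(ii) is also not what Duan--Pettie delivers: the output is a pair $(T,B)$ with $|B|\le |U|^{1/f}$ (for threshold $r\approx |U|^{1-1/f}$) such that $T-B$ has max degree $\le r$ \emph{and} $T-B$ is still a Steiner forest for $U-B$ in $G-B$ --- not a hierarchical partition.

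Consequently the paper's scheme is \emph{not} recursive in $f$. A vertex $x\in B$ stores, for every $K\subseteq B$ with $x\in K$ and $|K|\le f$, a short ``subset label'' $\hat\ell(K)$ encoding whether $K$ separates $U-F$; a vertex $x\notin B$ stores the $s$-$t$ labels of its $\le r$ tree-neighbors and of one terminal $u_x$ reachable from $x$ in $G-B$. The length bound is a one-shot calculation: $|B|^{f-1}\le |U|^{1-1/f}$ and $r\le |U|^{1-1/f}$. The real content is the correctness proof (\Cref{claim:completness}), which uses the ``$T-B$ is a Steiner forest in $G-B$'' guarantee to show that whenever $F\cap B$ does not already separate $U-F$, some stored tree-neighbor label lands in each terminal-containing component of $G-F$. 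Your recursion $L(k,f)\le \Theta(k^{1/f})\cdot L(k^{1-1/f},f-1)$ solves to the right number, but you give no mechanism by which a fault in one ``piece'' of $T$ yields a bona fide $(f-1)$-fault Steiner instance on that piece; indeed, the paper's warm-up (\Cref{sect:warm-up}) shows that the natural recursion of this flavor only achieves $|U|^{1-1/2^{f-1}}$.
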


The $f$-FT Steiner connectivity labeling scheme of~\Cref{thm:main} supports any query $F \subseteq V \cup E$ of vertices and edges with $|F| \leq f$.
However, the challenge lies entirely in the case of vertex failures, i.e., $F \subseteq V$: proving~\Cref{thm:main} for this case implies also the general case.
Indeed, we can construct a new instance $(G',U)$ where $G'$ is obtained from $G$ by splitting each edge $e$ of $G$ by a new middle $v_e$, so the failure of $e$ in $G$ is equivalent to the failure of $v_e$ in $G'$ (in terms of affect on connectivity between original vertices).

As for the case of only edge failures $F \subseteq E$, much like in the global connectivity variant, it behaves very differently and admits labels of $\poly(f, \log n)$ bits, implicit by known constructions of $f$-FT connectivity labels for pairwise connectivity queries (the standard $s$-$t$ variant in the literature).
Specifically, both the labels of~\cite{DoryP21} based on linear graph sketches and its derandomized version in~\cite{IzumiEWM23} are straightforward to augment to answer Steiner connectivity queries,%
\footnote{
These are based on fixing some spanning tree $T$ for $G$, and reconnecting the connected components of $T-F$ to obtain those of $G-F$.
Augment the label of each tree edge with the number of terminals in the subtree below it.
Then we can invoke a similar strategy to the sketch computation in~\cite{DoryP21,IzumiEWM23} to determine the number of terminals in each component of $T-F$, and hence also of $G-F$.
}
and other constructions in~\cite{DoryP21,LongPS25} seem plausibly amenable to such augmentations as well.
In light of the discussions above, in the rest of the paper we will only consider the case of vertex failures, i.e., queries are of the form $F \subseteq V$ with $|F| \leq f$.

\paragraph{Related work: Steiner variants.}
We mention that providing ``Steiner generalizations'' that interpolate between the two extreme scenarios of the terminal set ($|U|=n$ and $|U|=2$) is a common objective in various algorithmic and graph-theoretic problems, such as computing Steiner mincut \cite{DBLP:journals/siamcomp/DinitzV00, cole2003fast, he2024cactus, DBLP:conf/esa/0001024}, data structures for Steiner mincuts \cite{dinitz1994connectivity, DBLP:journals/siamcomp/DinitzV00, DBLP:conf/sosa/BaswanaP25} and minimum+1 Steiner cuts \cite{bhanja2025}, sensitivity oracles for Steiner mincuts \cite{DBLP:conf/isaac/Bhanja24, bhanja2025},
Steiner connectivity augmentation and splitting-off \cite{cen2023steiner} and construction of cactus graphs for compactly storing Steiner mincuts \cite{DBLP:journals/siamcomp/DinitzV00, he2024cactus}.

\paragraph{Related work: connectivity oracles.}
FT connectivity labeling schemes can be seen as distributed versions of \emph{connectivity oracles under failures}, which are centralized data structures designed to answer the same kind of queries.
These have been extensively studied especially in the $s$-$t$ variant, with both edge
faults~\cite{PatrascuT07,DuanP20,GibbKKT15} and vertex faults~\cite{DuanP20,BrandS19,LongS22,PilipczukSSTV22,kosinas:LIPIcs.ESA.2023.75,LongW24}.
Much like with labeling schemes, in the global variant of connectivity oracles, vertex failures require much larger complexity than edge failures, as was first observed by Long and Saranurak~\cite{LongS22}.
Very recently, Jiang, Parter and Petruschka~\cite{JiangPP25} provided almost-optimal oracles for global connectivity under vertex failures, and their approach seems plausible to extend also to the Steiner variant.

\subsection{Our Approach}\label{sect:overview}

While our scheme of~\Cref{thm:main} and its formal analysis are not complicated, the intuitions behind them might seem illusive.
In this section we aim to clarify these.
We first introduce a warm-up suboptimal labeling scheme for the Steiner case, then survey the solution of~\cite{JiangPP25} for the global variant, and finally explain how we ``glue together'' ideas from both to get our solution.
For notational convenience, we use here $\tilde{O}(\cdot)$ to hide $\poly(f,\log n)$ factors.

A basic building block are the previously-introduced  $f$-FT labels for the $s$-$t$ connectivity variant, with length $\tilde{O}(1)$ bits~\cite{ParterPP24,LongPS25}.
Their construction is highly complicated, but we use them as a black-box.
Let $\ell(\cdot)$ denote these labels, and
$L(\cdot)$ the labels that we construct; each $L(\cdot)$-label will store many $\ell(\cdot)$-labels.

\paragraph{Warm-up: using degree classification.}
To gain intuition,
we first sketch a \emph{suboptimal} construction based on a strategy introduced in~\cite{ParterP22a}.
Assume first $U=V$.
Choose an arbitrary spanning tree $T$ in $G$.
Let $r$ be some degree threshold.
Let $B$ be the set of vertices with $T$-degree $\geq r$ (high-deg), and $V-B$ be the rest (low-deg).
For a low-deg $x \in V-B$, let $L(x)$ store $\{\ell(x)\} \cup \{\ell(y) \mid \text{$y$ is $T$-neighbor of $x$} \}$. 
This handles the following queries:
\begin{itemize}
    \item \highlight{(all-low-deg)} If $F \subseteq V-B$, use the $\ell(\cdot)$-labels to check pairwise connectivity between all $T$-neighbors of $F$; one observes that $F$ is a cut in $G$ iff we find some disconnected pair.
\end{itemize}
To handle the remaining queries where $F \cap B \neq \emptyset$, we use recursion.
For each $x \in V$ let $L(x)$ store, for every $y \in B$, $x$'s label \emph{to handle $f-1$ faults in $G-\{y\}$} (recursively).
Then,
\begin{itemize}
    \item \highlight{(one-high-deg)} If there is some $y \in F \cap B$, we answer the query $F-\{y\}$ in $G-y$.
\end{itemize}
Let $b_f$ denote the resulting label length.
As $|B| = O(n/r)$, we get $b_f = \tilde{O}(r + (n/r) \cdot b_{f-1})$.
Optimizing $r$ and solving the recurrence (with base case $b_1 =1$) yields $b_f = \tilde{O}(n^{1-1/2^{f-1}})$.

Generalizing to the Steiner variant with arbitrary terminals $U \subseteq V$ is rather smooth.
Essentially, one takes $T$ as a \emph{minimal Steiner tree} that connects all terminals $U$.
Then high-deg vertices in $T$ are only $|B| \leq O(|U|/r)$, which eventually results in $\tilde{O}(|U|^{1-1/2^{f-1}})$-bit labels.
We give the details in~\Cref{sect:warm-up}.

\paragraph{Existing solution: the global case.}
As previously mentioned, Jiang, Parter and Petruschka~\cite{JiangPP25} gave near-optimal labels for the $f$-FT global connectivity variant, i.e., the case $U=V$.
Roughly speaking, they also use a high/low-deg partition $B$ and $V-B$ (with threshold $r$),
but divide queries $F$ in a dual manner to the warm-up above:
\highlight{(all-high-deg)} $F \subseteq B$, or \highlight{(one-low-deg)} $F \cap (V-B) \neq \emptyset$.
The high/low-deg partition is w.r.t.\ to degrees in $G$, not in a spanning tree.
To enable this, they use \emph{Nagamochi-Ibaraki sparsification}~\cite{NagamochiI92},
yielding a subgraph $H$ of $G$ with $\tilde{O}(n)$ edges such that $G-F$ and $H-F$ have the same connected components, for every $F \subseteq V$, $|F| \leq f$.
So w.l.o.g., $G$ has only $m = \tilde{O}(n)$ edges.

To see the broad strokes, we explain their solution in the special case that $G$ is $f$-vertex-connected.
As~\cite{JiangPP25} observe, in this case, if the query $F$ is indeed a cut, then every $x \in F$ is adjacent to every connected component of $G-F$.
Thus, for a low-deg $x \in V-B$, one lets $L(x)$ store $\{\ell(y) \mid (x,y)\in E\}$, and answers a query with $x \in F$ by looking for a pair of neighbors disconnected in $G-F$.
Note that this argument crucially hinges on the fact that these are all of $x$'s neighbors in the (sparsified) graph $G$, and not in a spanning tree.
To handle queries $F \subseteq B$, one observes that a high-deg $x\in B$ can participate in at most $O(|B|^{f-1})$ such queries, and simply writes the answers to these inside $L(x)$.
As $|B| \leq O(m/r)$, the label length is $\tilde{O}(r + (m/r)^{f-1})$.
Setting $r = m^{1-1/f}$ gives $\tilde{O}( m^{1-1/f}) = \tilde{O}(n^{1-1/f})$.

\paragraph{Our new solution: the Steiner case.}
The main issue that prevents extending the solution of~\cite{JiangPP25} to the Steiner case with arbitrary terminals $U \subseteq V$ lies in the sparsification step.
Intuitively, to get $\tilde{O}(|U|^{1-1/f})$ label length, 
we would like to have $\tilde{O}(|U|)$ edges in the sparsified subgraph $H$,
but then preserving connected components under faults is clearly impossible: $|U|$ can be much smaller than $|V|$, so such $H$ cannot even be connected.
To tackle this, one may wish to construct other meaningful sparsifiers for Steiner connectivity under faults.
This seems to be a challenging task, as even a plausible definition is non-trivial.

Instead, we take a different route and combine the ``best-of-both-worlds'': we use a Steiner tree for high/low-deg partition as in the warm-up, but (roughly speaking) divide queries $F$ to \emph{all-high-deg} or \emph{one-low-deg} as in~\cite{JiangPP25}.
Thus, even in the global case $U=V$, our approach yields a near-optimal scheme which is different than that of~\cite{JiangPP25}.

On a high level, the key is not using just any (minimal) Steiner tree $T$, but one where high-degree vertices $B$ admit a lot of structure, by using a powerful graph decomposition of Duan and Pettie~\cite{DuanP20}.
While it still holds that $|B| \leq O(|U|/r)$, the additional structure is that after deleting $B$ from $G$, the remaining forest $T-B$ is still a Steiner forest for $U-B$
(i.e., connected terminals in $G-B$ are also connected in $T-B$).
The main technical part of our solution lies in a structural analysis that studies the interplay between the (known-in-advance) $G-B$ and $T-B$, and the terminals in the graph $G-F$ defined by a given query $F$ (this is mostly reflected in the correctness proof, specifically \Cref{claim:completness}).
Essentially, this extra structure lets us design labels where it is enough for low-deg vertices to be defined w.r.t.\ the tree, and store the $\ell(\cdot)$-labels of only their tree neighbors.

\section{Preliminaries}

Throughout the paper, $G = (V,E)$ is an undirected graph with $n$ vertices, $U \subseteq V$ is a designated terminal set in $G$, and $f \geq 1$ is an integer parameter.

For $K, W \subseteq V$, we say that $K$ \emph{separates} $W$ if there exist two vertices $w,w' \in W-K$ that are disconnected in $G-K$.
A vertex set $F \subseteq V$ is called \emph{a cut} in $G$ if it separates $V$,
and a \emph{Steiner cut} if it separates the terminals $U$.

To shorten statements of theorems, lemmas, etc., we always use the term `algorithm' to mean a deterministic polynomial-time algorithm.

\section{Basic Tools}\label{sect:basic-tools}

As mentioned in~\Cref{sect:overview}, the most basic tool, which we use as a black box, is the succinct $f$-FT labels for $s$-$t$ connectivity:

\begin{theorem}[\cite{LongPS25}]\label{thm:st-labels}
    There is an algorithm that outputs an $O(f^4 \log^{7.5}(n))$-bit label $\ell(v)$ to every $v \in V$ with the following property:
    
    For every $F \subseteq V$ with $|F| \leq f$ and $s, t \in V$, one can determine if $s,t$ are connected in $G-F$, by only inspecting the $\ell(\cdot)$-labels of $F \cup \{s,t\}$.
\end{theorem}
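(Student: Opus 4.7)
The plan is to build the labels from a hierarchical decomposition of $G$ based on small vertex separators. The starting observation is Mengerian: if $s,t$ have vertex-connectivity strictly greater than $f$ in $G$, then any $F \subseteq V$ with $|F| \leq f$ leaves them in the same connected component of $G-F$. So labels really only need to certify disconnection for pairs whose $s$-$t$ vertex-connectivity is at most $f$, together with the accompanying ``small'' separators that realize it.

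First, I would recursively partition $G$ by balanced vertex separators of size $O(f)$: at each step, either the current piece is already $(f+1)$-vertex-connected (and the recursion terminates, since all internal pairs trivially survive any $f$-fault query), or there is a small separator $S$ that splits the current piece into roughly balanced sides, on which we recurse. This yields a decomposition tree of depth $O(\log n)$ in which every internal node carries a separator of size $O(f)$. The label $\ell(v)$ would then encode $v$'s root-to-leaf path in this tree together with the identity of the separator chosen at each level (which side of the separator $v$ lies in, and the $O(f)$ vertex identifiers of $S$ itself).

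For a query $(s,t,F)$, the labels of $s$ and $t$ determine where their paths through the decomposition tree diverge. At the deepest common node, some separator $S^*$ cuts $s$'s side from $t$'s side within the current piece. If $S^* \not\subseteq F$, some vertex of $S^*$ survives and can potentially relay $s$ to $t$, and we recursively test connectivity to this relay using the same labels; if $S^* \subseteq F$, we climb one level up and repeat. A careful combinatorial argument would show that this recursive check uses only the labels of $F \cup \{s,t\}$ and never needs to reach back into $G$.

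The main obstacle --- and the reason \Cref{thm:st-labels} is a genuinely difficult result rather than a short exercise --- is that a query $F$ of size $f$ can ``partially'' kill many separators at once, interacting with the hierarchy in complex ways; guaranteeing that the labels still capture enough \emph{residual} connectivity among the surviving separator vertices requires additional ``interface'' information at each level, and keeping the per-level overhead polylogarithmic in $n$ rather than polynomial in $f$ would require derandomized graph-sketch techniques in the spirit of Dory--Parter and Izumi--Emek--Wasa--Meir, extended to vertex failures as in \cite{LongPS25}. The final $O(f^4 \log^{7.5}(n))$ bound is consistent with a handful of recursion levels, each accruing $\polylog(n)$ overhead from such a sketch.
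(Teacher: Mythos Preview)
The paper does not prove this theorem: it is quoted from \cite{LongPS25} and explicitly used as a black box (``the most basic tool, which we use as a black box''). So there is no in-paper proof to compare against.

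Independently of that, your sketch has a genuine structural gap. The dichotomy you build the hierarchy on --- ``either the current piece is $(f{+}1)$-vertex-connected, or it has an $O(f)$-size separator splitting it into roughly balanced sides'' --- is false. An $f$-regular expander on $n$ vertices has vertex-connectivity at most $f$ (delete the $f$ neighbours of any vertex), yet by expansion every separator isolating a constant fraction of the vertices must itself have $\Omega(n)$ vertices; every cut of size $O(f)$ peels off only $O(f)$ vertices. Recursing on such lopsided splits yields depth $\Omega(n/f)$ rather than $O(\log n)$, and the label length blows up. The construction in \cite{LongPS25} (building on \cite{ParterPP24,DuanP20}) does not attempt balanced vertex separators at all; it relies on low-degree tree hierarchies and sketch-based machinery whose structure is quite different from a separator tree. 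Your final paragraph correctly senses that extra ``interface'' information is needed, but the issue is more fundamental than an interface fix: the underlying hierarchy has to be chosen by a different principle.
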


As explained in~\Cref{sect:overview}, our labels $L(\cdot)$ of~\Cref{thm:main} store many $\ell(\cdot)$ labels, and to answer a query $F$, we make many $s$-$t$ connectivity queries in $G-F$ using these $\ell(\cdot)$-labels.
Suppose that during the query algorithm, we found in this way a pair $s,t$ which is disconnected in $G-F$.
Then clearly $F$ is a cut in $G$, but we still cannot determine that is a \emph{Steiner cut}, as one of $s,t$ might lie in a connected component of $G-F$ without any terminals.
We now present a simple tool that augments~\Cref{thm:st-labels} to help us detect this situation.

\begin{lemma}\label{cor:is-conn-to-U-labels}
    There is an algorithm that outputs an $O(f^4 \log^{7.5})$-bit label $\bar{\ell}(v)$ to every $v \in V$ with the following property:
    
    For every $F \subseteq V$ with $|F| \leq f$ and $x \in V$, one can determine there exists some $u \in U$ such that $x,u$ are connected in $G-F$, by only inspecting the $\bar{\ell}(\cdot)$-labels of $F \cup \{x\}$.
\end{lemma}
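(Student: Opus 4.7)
The plan is to reduce to the $s$-$t$ connectivity labels of~\Cref{thm:st-labels} via a standard virtual-vertex trick. Specifically, I would construct an augmented graph $G' = (V', E')$ with $V' = V \cup \{u^*\}$, where $u^*$ is a new auxiliary vertex, and $E' = E \cup \{(u^*, u) : u \in U\}$. Since $u^*$ is connected precisely to the terminals and nothing else, the key equivalence is: for any $F \subseteq V$ and any $x \in V$, there exists $u \in U$ with $x$ connected to $u$ in $G - F$ if and only if $x$ is connected to $u^*$ in $G' - F$. The ``only if'' direction is immediate by appending the edge $(u, u^*)$ to the path, and the ``if'' direction follows because $u^*$ lies in $V' - F$ (it is not in $V$, so it cannot be faulty) and any $x$-to-$u^*$ path in $G' - F$ must use some edge $(u, u^*)$ with $u \in U$, giving the desired terminal $u$ with $x$ connected to $u$ in $G - F$.

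Next, I apply~\Cref{thm:st-labels} to $G'$ (which has $n+1$ vertices and supports up to $f$ faults) to obtain $s$-$t$ connectivity labels $\ell'(v)$ of size $O(f^4 \log^{7.5} n)$ bits for every $v \in V'$. The final label $\bar{\ell}(v)$ for $v \in V$ is defined by concatenation: $\bar{\ell}(v) = (\ell'(v), \ell'(u^*))$. That is, each vertex's label embeds its own $s$-$t$ label in $G'$ together with a fixed copy of the $s$-$t$ label of $u^*$. The length is still $O(f^4 \log^{7.5} n)$, and since $\ell'(u^*)$ is the same bit-string for every vertex, the query algorithm can extract it from the label of any vertex present in $F \cup \{x\}$ (in particular, from $\bar{\ell}(x)$).

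The query algorithm is then immediate: given the labels $\{\bar{\ell}(v) : v \in F \cup \{x\}\}$, extract $\{\ell'(v) : v \in F \cup \{x\}\}$ together with $\ell'(u^*)$, and invoke the decoder of~\Cref{thm:st-labels} on the pair $(x, u^*)$ under fault set $F$ in $G'$. By the equivalence above, this correctly decides whether $x$ is connected to some terminal in $G - F$. Construction time is polynomial, since it consists of building $G'$ and a single invocation of the algorithm of~\Cref{thm:st-labels}. There is no real technical obstacle here; the only point to be mindful of is ensuring $\ell'(u^*)$ is always accessible to the decoder, which is handled by carrying it inside every $\bar{\ell}(v)$.
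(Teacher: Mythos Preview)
Your proposal is correct and essentially identical to the paper's own proof: the paper also adds a single auxiliary vertex (called $z$ there, $u^*$ in your write-up) adjacent to every terminal, applies \Cref{thm:st-labels} to the augmented graph, and sets $\bar{\ell}(v) = (\ell'(v), \ell'(z))$ so that the decoder can test connectivity between $x$ and the auxiliary vertex in $G'-F$. Your version just spells out the two directions of the equivalence a bit more explicitly.
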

\begin{proof}
    Construct an auxiliary graph $G'$ from $G$ by adding a new vertex $z$ and adding an edge $(z,u)$ for every terminal $u \in U$.
    Let $\ell'(\cdot)$ be the labels computed by the algorithm of~\Cref{thm:st-labels} on $G'$, and define $\bar{\ell}(v) := (\ell'(v), \ell'(z))$.
    Now, let $F \subseteq V$ with $|F| \leq f$ and $x \in V$.
    Note that there exists some $u \in U$ which is connected to $x$ in $G-F$ iff $x,z$ are connected in $G'-F$.
    The $\bar{\ell}(\cdot)$-labels of $F \cup \{x\}$ contain all $\ell'(\cdot)$-labels of $F \cup \{x,z\}$, so we can use them to determine if the latter condition holds.
\end{proof}

\section{The Labeling Scheme: Proof of~\Cref{thm:main}}

In this section, we prove our main~\Cref{thm:main}, giving $f$-FT Steiner connectivity labels of $|U|^{1-1/f} \cdot \poly(f, \log n)$ bits.
Recall that, as discussed in the introduction, we only need to consider up to $f$ vertex failures,
i.e., queries $F \subseteq V$ with $|F| \leq f$.

\paragraph{Subset labels.}
We first introduce a somewhat technical lemma.
Roughly speaking, in terms of the high-level intuition of~\Cref{sect:overview}, it serves to treat the \emph{all-high-degree} case.
The idea is to store explicit information for some relatively small collection of vertex-subsets (of ``high-deg vertices'', see~\Cref{sect:overview}), such we have the information on $K \subseteq F$, we can determine if $K$ itself already separates $U-F$, which implies that $F$ is a Steiner cut.

\begin{lemma}\label{lem:explicit-labels}
    There is an algorithm that given any $K \subseteq V$, outputs a \emph{subset label} $\hat{\ell}(K)$ of $O(f \log n)$ bits with the following property:

    For every $F \subseteq V$ with $|F| \leq f$, one can determine if $K$ separates $U-F$, by only inspecting $\hat{\ell}(K)$ and the names/identifiers of the vertices in $F$. 
\end{lemma}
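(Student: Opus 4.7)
My plan is to partition the analysis by the terminal distribution among the components of $G-K$: I show that in all sub-cases where ``many'' terminals are present the answer is forced to be \emph{yes} (so the label is just a flag), and in the remaining sub-cases the relevant terminal information fits in only $O(f)$ vertex identifiers.

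Let $N := |U-K|$, let $C_1, C_2, \ldots$ be the components of $G-K$ containing at least one terminal of $U$, and set $T_i := U \cap C_i$ and $n_i := |T_i|$. Call $C_i$ \emph{heavy} if $n_i > f$ and \emph{light} otherwise; let $h$ be the number of heavy components and $s := \sum_{i \text{ light}} n_i$. Note that ``$K$ separates $U-F$'' is equivalent to $|\{i : T_i \setminus F \neq \emptyset\}| \geq 2$. My plan is to show that the answer is always \emph{yes} in each of three sub-cases: (i) $h \geq 2$, since both heavy components satisfy $T_i \setminus F \neq \emptyset$ as $|T_i| > f \geq |F|$; (ii) $h = 1$ and $s > f$, since the heavy component always survives and, as $s > |F|$, at least one light terminal survives too; and (iii) $h = 0$ and $N > 2f$, since the $\geq N - f > f$ surviving terminals cannot all fit in a single component of size $\leq f$ and therefore must span $\geq 2$ components. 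In these three sub-cases, the label $\hat{\ell}(K)$ stores only a single ``yes''-flag.

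In the complementary regime the relevant terminal set is small: either $h = 0$ with $N \leq 2f$, or $h = 1$ with $s \leq f$. In the former, store all $\leq 2f$ vertices of $U-K$, each tagged with an opaque component-index in $\{1, \ldots, 2f\}$. In the latter, store only the $\leq f$ light terminals with their (light) component-indices, together with a bit flagging the existence of the unique heavy component. In either sub-case the label consists of at most $2f$ pairs of the form (vertex-identifier, component-index), occupying $O(f(\log n + \log f)) = O(f \log n)$ bits.

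The query algorithm reads the case-flag and outputs \emph{yes} in sub-cases (i)--(iii); otherwise, it discards stored pairs whose identifier lies in $F$ (checkable directly against the identifiers in $F$) and outputs \emph{yes} iff the number of distinct remaining component-indices is $\geq 2$, or is $\geq 1$ with the heavy-flag set (in which case the always-surviving heavy component contributes the second populated component). The main obstacle is reaching $O(f \log n)$ bits rather than the naive $O(f^2 \log n)$: a direct approach would store each of the $O(f)$ relevant populated components' terminal sets in full, at a cost of $O(f)$ identifiers per component. The key saving is precisely sub-case (iii)'s pigeonhole argument: whenever there are enough total terminals spread over many small components, the answer is already forced to be yes, so the fine-grained component distribution never needs to be recorded explicitly.
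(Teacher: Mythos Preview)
Your proof is correct and takes essentially the same approach as the paper: both case-split on the terminal distribution among components of $G-K$, record only a ``yes''-flag when enough terminals guarantee two surviving components, and otherwise explicitly store the $O(f)$ relevant terminals with their component indices. Your heavy/light classification is a mild reorganization of the paper's prefix-sum criterion (and even yields a slightly tighter $2f$ rather than $3f$ bound on stored terminals), but the underlying idea and the key pigeonhole saving over the naive $O(f^2\log n)$ construction are the same.
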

\begin{proof}
    Let $C_1, \dots, C_p$ be the connected components of $G-K$ that contain some terminal, and denote $U_i = C_i \cap U$, where w.l.o.g.\ $|U_1| \leq \dots \leq |U_p|$.
    Our goal is that given $\hat{\ell}(K)$ and any $F \subseteq V$, $|F|\leq f$, we can decide if at least two sets in $\{U_i-F \mid 1 \leq i \leq p\}$ are non-empty.

    Suppose there exists some $q$ such that $\sum_{i=1}^q |U_i| > f$ and $\sum_{i=q+1}^p |U_i| > f$.
    Then for every $F \subseteq V$ with $|F|\leq f$ there is one non-empty set in $\{U_i -F \mid i \leq q\}$ and another in $\{U_i-F \mid i \geq q+1\}$.
    Thus, we just indicate this situation in $\hat{\ell}(K)$ and always answer YES.
    
    Assume now that no such $q$ exists.
    \begin{itemize}
        \item If $|U_p| > f$:
        Then 
        $\sum_{i=1}^{p-1} |U_i| < f$, and we let  $\hat{\ell}(K)$ store $U_1, \dots, U_{p-1}$ (and indicate that $|U_p| > f$).
        This suffices as for every $F \subseteq V$ with $|F|\leq f$ we know that $U_p - F \neq \emptyset$, so we only need to decide if $\{U_i - F \mid 1 \leq i \leq p-1\}$ has a non-empty set.

        \item If $|U_p| \leq f$: 
        We assert that the sum $\sum_{i=1}^p |U_i|$ is at most $3f$.
        Indeed, if it is more than $f$, let $q$ be maximal such that $\sum_{i=q+1}^p |U_i| > f$.
        Then $\sum_{i=1}^q |U_i| \leq f$, $\sum_{i= q+2}^p |U_i| \leq f$ (by maximality), and $|U_{q+1}| \leq |U_p| \leq f$.
        Thus, we just let $\hat{\ell}(K)$ store all of $U_1, \dots, U_p$.
    \end{itemize}
\end{proof}

\paragraph{The Duan-Pettie Decomposition.}

Our label construction crucially relies on a graph decomposition algorithm of Duan-Pettie~\cite{DuanP20} (as explained in~\Cref{sect:overview}).
This algorithm is a recursive version of the F\"{u}rer-Raghavachari~\cite{FurerR94} algorithm for finding a Steiner tree with smallest possible maximum degree, up to $+1$ error.
The Duan-Pettie decomposition utilizes the structural properties of the F\"{u}rer-Raghavachari tree (rather than its strong approximation guarantee).
In the following, a \emph{Steiner forest} means a forest in the graph where each pair of connected terminals (in the graph) is also connected in the forest.

\begin{theorem}[\cite{DuanP20}]\label{thm:Decomp}
    Let $r \ge 3$ be a \emph{degree threshold}.
    There is an algorithm $\Decomp(G,U,r)$ that outputs a pair $(T,B)$
    such that the following hold:
    \begin{enumerate}
    \item $T$ is a Steiner forest for $U$ in $G$, and $T-B$ is a Steiner forest for $U-B$ in $G-B$.
    \label{prop:Steiner}
    \item The maximum degree in the forest $T-B$ is at most $r$.
    \label{prop:max-deg}
    \item $|B| < |U|/(r-2)$ and $|B\cap U| < |U|/(r-1)$.
    \label{prop:few-bads}
    \end{enumerate}
    The running time of $\Decomp(G,U,r)$ is $O(|U| m\log|U|)$.
\end{theorem}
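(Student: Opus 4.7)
The plan is to adapt the Fürer-Raghavachari (FR) local-improvement framework for low-degree Steiner tree, potentially applied recursively. I would first compute any Steiner tree $T$ connecting $U$ in $G$ (for instance, by greedily adding shortest paths between terminals) and then iteratively perform degree-reducing swaps: for a vertex $v$ with $\deg_T(v) \ge r$, search for a non-tree edge $(x,y) \in E \setminus T$ such that $x, y$ lie in two distinct components of $T-v$ and $\max(\deg_T(x), \deg_T(y)) < r-1$; if such an edge exists, replace a tree edge incident to $v$ on the newly-created cycle with $(x,y)$, strictly decreasing $\deg_T(v)$. I iterate until no such improvement is possible, and declare $B := \{v \in V(T) : \deg_T(v) \ge r\}$.

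Properties (2) and (3) should then follow from elementary counting. Property (2) is immediate from the definition of $B$. For (3), after pruning non-terminal leaves of $T$, every leaf lies in $U$, so $|V(T)| \le 2|U|-1$; combining the handshake identity $\sum_v \deg_T(v) = 2(|V(T)|-1)$ with the observation that each $v \in B$ contributes at least $r-2$ excess to $\sum_v(\deg_T(v)-2)$ should yield $|B| < |U|/(r-2)$. A refined accounting that separates terminal from non-terminal high-degree vertices should give the sharper $|B \cap U| < |U|/(r-1)$.

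The main obstacle is property (1): verifying that $T-B$ is a Steiner forest for $U-B$ in $G-B$. My intended argument exploits local optimality of $T$: if two terminals $u, u' \in U-B$ were connected in $G-B$ but lay in different components of $T-B$, then concatenating a $G-B$ path from $u$ to $u'$ with the $T$-path between them produces a cycle through some $v \in B$, from which a valid swap at $v$ can be extracted via the low-degree shortcut edges, contradicting termination. I expect that this clean argument may fail in corner cases (for example, when both shortcut endpoints are adjacent to $B$ or themselves have degree close to $r-1$), in which case I would fall back on recursion: for each component $C$ of $G-B$ on which $T$ fails to Steiner-connect $U \cap C$, recursively call $\Decomp(G[C], U \cap C, r)$ and splice its output into $(T, B)$. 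The most delicate part of the analysis will then be the inductive accounting on $|B|$ across recursion levels, which should telescope geometrically when each recursive call is weighted by $|U \cap C|$. The stated running time $O(|U| m \log |U|)$ should follow by bounding the total number of swaps via a monovariant (e.g., the lex-sorted degree sequence) times the $O(m)$ cost of locating a swap.
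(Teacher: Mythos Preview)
The paper does not prove this theorem: it is stated as a black-box result imported from Duan and Pettie~\cite{DuanP20}, with no proof or proof sketch given. The only information the paper provides beyond the statement itself is the one-sentence remark that the algorithm ``is a recursive version of the F\"{u}rer--Raghavachari~\cite{FurerR94} algorithm for finding a Steiner tree with smallest possible maximum degree, up to $+1$ error,'' and that it ``utilizes the structural properties of the F\"{u}rer--Raghavachari tree (rather than its strong approximation guarantee).'' So there is no paper-side proof to compare your proposal against.

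That said, your outline is broadly consistent with that description: you correctly identify the FR local-improvement framework as the engine, you correctly flag Property~(1) as the nontrivial part, and your anticipated fallback---recursing on each component of $G-B$ where $T-B$ fails to Steiner-connect the surviving terminals---is exactly the ``recursive'' aspect the paper alludes to. Your excess-degree counting for $|B| < |U|/(r-2)$ is fine (indeed, the paper proves essentially the same inequality as a standalone Fact in the appendix). Two places where your sketch is thin: (i) your direct local-optimality argument for Property~(1) is, as you yourself suspect, not robust---the swap you try to extract from the $u$--$u'$ cycle can be blocked when the candidate endpoints already have degree $r-1$ or are themselves in $B$, so the recursive construction is not merely a fallback but is genuinely needed; and (ii) your ``refined accounting'' for the sharper bound $|B\cap U| < |U|/(r-1)$ is not spelled out, and the naive excess argument does not obviously yield it. For a full proof you would need to consult~\cite{DuanP20} directly.
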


In the following we assume that $f \geq 2$, as~\Cref{thm:main} is trivial when $f=1$: the label of vertex $x$ just stores a single bit indicating whether $\{x\}$ is a Steiner cut.

\paragraph{Labeling.}
Our first step is to invoke $\Decomp(G,U,r := |U|^{1-1/f} + 2)$ of~\Cref{thm:Decomp}; let $(T,B)$ be the output.
For every $x \in V$, we choose an arbitrary terminal $u_x \in U$ such that $x,u_x$ are connected in $G-B$ (where $u_x := \perp$, i.e., a null value, if there is no such terminal).
Next, we construct the $\ell(\cdot)$-labels and $\bar{\ell}(\cdot)$-labels of~\Cref{thm:st-labels,cor:is-conn-to-U-labels}; for every $v \in V$, define $\ell^* (v) := (v, \ell(v), \bar{\ell}(v) )$.
Finally, the $L(\cdot)$ labels are constructed as follows:

\begin{algorithm}[H]
\caption{Creating the label $L(x)$ of a $x \in V$}\label{alg:label}
\begin{algorithmic}[1]
\State \textbf{store} $\ell^*(x)$
\State \textbf{store} $\ell^*(u_x)$ (or $\perp$ if $u_x = \perp$) \label{line:store-ux}
\State \textbf{store} $\{\ell^*(b) \mid b \in B\}$ \label{line:store-B}
\State \textbf{store} $\hat{\ell}(K=\emptyset)$ of~\Cref{lem:explicit-labels} (i.e., applying this lemma with $K = \emptyset$)
\If{$x \notin B$}
    \State \textbf{store} $\{\ell^*(y) \mid \text{$(x,y)$ is an edge of $T-B$}\}$ \label{line:low-deg}
\Else \Comment{$x \in B$}
    \For{every $K \subseteq B$ s.t.\ $x \in K$ and $|K| \leq f$} \label{line:store-explicit-labels1}
        \State \textbf{store} $K$ along with $\hat{\ell}(K)$ of~\Cref{lem:explicit-labels} \label{line:store-explicit-labels2}
    \EndFor
\EndIf
\end{algorithmic}
\end{algorithm}

The label length analysis is rather straightforward,
given in detail in the following claim.
\begin{claim}\label{claim:label-legnth}
    Each label $L(x)$ generated by~\Cref{alg:label} has $|U|^{1-1/f} \cdot \poly(f,\log n)$ bits, and takes polynomial time to compute.
\end{claim}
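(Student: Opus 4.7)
The plan is a direct line-by-line size accounting of \Cref{alg:label}, exploiting the two quantitative guarantees of the Duan--Pettie decomposition (\Cref{thm:Decomp}) together with the chosen threshold $r = |U|^{1-1/f}+2$. The first thing I would record is the two consequences of this choice of $r$: since $r-2 = |U|^{1-1/f}$, property~\ref{prop:few-bads} gives $|B| < |U|/(r-2) = |U|^{1/f}$; and by property~\ref{prop:max-deg}, every vertex has at most $r = |U|^{1-1/f}+2$ neighbors in the forest $T-B$. These are exactly the two bounds that will be tight in the analysis.

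Next I would bound the size of each atomic object stored by the algorithm: each $\ell^*(v) = (v,\ell(v),\bar{\ell}(v))$ is $\poly(f,\log n)$ bits by \Cref{thm:st-labels} and \Cref{cor:is-conn-to-U-labels}, and each subset label $\hat{\ell}(K)$ is $O(f\log n)$ bits by \Cref{lem:explicit-labels} (and the set $K$ itself, having size at most $f$, is also encodable in $O(f\log n)$ bits). With these in hand, the ``small'' lines are immediate: $\ell^*(x)$, $\ell^*(u_x)$ (or $\perp$), and $\hat{\ell}(\emptyset)$ each contribute only $\poly(f,\log n)$ bits, while the list $\{\ell^*(b) : b \in B\}$ in line~\ref{line:store-B} contributes at most $|B|\cdot \poly(f,\log n) \le |U|^{1/f}\cdot \poly(f,\log n)$ bits, comfortably within the target.

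The two remaining lines are what drive the overall bound, and both work out to exactly $|U|^{1-1/f}\cdot \poly(f,\log n)$. For a low-degree $x\notin B$, line~\ref{line:low-deg} stores $\ell^*(y)$ for at most $r$ neighbors $y$, contributing $r\cdot \poly(f,\log n) = |U|^{1-1/f}\cdot \poly(f,\log n)$ bits. For a high-degree $x \in B$, the loop in lines~\ref{line:store-explicit-labels1}--\ref{line:store-explicit-labels2} enumerates subsets $K\subseteq B$ containing $x$ with $|K|\le f$, of which there are at most $\sum_{i=0}^{f-1}\binom{|B|-1}{i} \le f\cdot |B|^{f-1} \le f\cdot |U|^{(f-1)/f} = \poly(f)\cdot |U|^{1-1/f}$, each contributing $O(f\log n)$ bits. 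Summing over all lines of \Cref{alg:label} then gives the claimed label length.

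For the running time I would simply note that $\Decomp(G,U,r)$ runs in polynomial time by \Cref{thm:Decomp}, the $\ell(\cdot)$ and $\bar{\ell}(\cdot)$ labels are computable in polynomial time by \Cref{thm:st-labels} and \Cref{cor:is-conn-to-U-labels}, each $u_x$ is found by a BFS in $G-B$, and the enumeration in line~\ref{line:store-explicit-labels1} visits polynomially many subsets, each processed in polynomial time using the algorithm of \Cref{lem:explicit-labels}. I do not foresee any genuine obstacle; the proof is essentially bookkeeping, with the only real content being the recognition that the two extremal contributions --- one $T-B$ neighborhood of size $r$, and one enumeration over $f$-subsets of $B$ of size roughly $|B|^{f-1}$ --- are precisely balanced by the choice $r = |U|^{1-1/f}+2$, which is why that value was fed to $\Decomp$ in the first place.
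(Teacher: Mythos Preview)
Your proposal is correct and follows essentially the same line-by-line accounting as the paper. The only cosmetic differences are that the paper bounds the number of subsets $K$ via $\binom{n}{k}\le (ne/k)^k$ and the convergence of $\sum_k (e/k)^k$ to get $O(|B|^{f-1})$ without the extra factor of $f$ (which you simply absorb into $\poly(f)$), and that the paper explicitly invokes the standing assumption $f\ge 2$ to justify $|U|^{1/f}\le |U|^{1-1/f}$ when bounding line~\ref{line:store-B}; you should make that step explicit rather than calling it ``comfortably within the target.''
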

\begin{proof}
    By~\Cref{thm:st-labels}, \Cref{cor:is-conn-to-U-labels} and~\Cref{lem:explicit-labels} each $\ell^*(\cdot)$-label or $\hat{\ell}(\cdot)$-label consists of $\poly(f, \log n)$ bits,
    so we show that only $O(|U|^{1-1/f})$ such labels are stored in $L(x)$.
    The first four lines only store $O(|B|)$ such labels, and $|B| \leq |U|/(r-2) = |U|^{1/f}$ by the properties of $\Decomp(G,U,r)$ in~\Cref{thm:Decomp}).
    As we assume $f \geq 2$, $|U|^{1/f} \leq |U|^{1-1/f}$, so this is within budget.
    If $x \notin B$,
    then, line~\ref{line:low-deg} stores $\ell^*(y)$ for each neighbor $y$ of $x$ in $T-B$, which are at most $r = O(|U|^{1-1/f})$ again by~\Cref{thm:Decomp}.
    Next suppose $x \in B$.
    The number of sets $K \subseteq B$ s.t.\ $x \in K$ and $|K| \leq f$ can be bounded by
    \[
    \sum_{k=0}^{f-1} \binom{|B|-1}{k} \leq 1 + \sum_{k=1}^{f-1} \left( \frac{|B| \cdot e}{k} \right)^k
    \leq 1 +  |B|^{f-1} \cdot \sum_{k=1}^{f-1} \left(\frac{e}{k}\right)^k 
    \leq O(|U|^{1-1/f})
    \]
    (we used $\binom{n}{k} \leq (n/k)^k$ for $1 \leq k \leq n$, $\sum_{k=1}^{\infty} (e/k)^k$ is convergent, and $|B| \leq |U|^{1/f}$).
    Thus, lines~\ref{line:store-explicit-labels1} and~\ref{line:store-explicit-labels2} store $O(|U|^{1-1/f})$ $\hat{\ell}(\cdot)$-labels.

    Note that 
    constructing the $\ell^*(\cdot)$ labels for all vertices takes polynomial time by~\Cref{thm:st-labels} and~\Cref{cor:is-conn-to-U-labels}.
    Also, we only need to construct $\hat{\ell}(K)$ labels for sets $K \subseteq B$ such that $x \in B$ and $|K| \leq f$ (or such that $K = \emptyset$), and there are only $O(|U|^{1-1/f})$ such sets $K$ by the arguments above;
    constructing each $\hat{\ell}(K)$ takes polynomial time by~\Cref{lem:explicit-labels}.
\end{proof}

\begin{remark} 
    The subset labels of~\Cref{lem:explicit-labels} are trivial to construct with $O(f^2 \log n)$ bits: with notations from the proof, choose arbitrary $f+1$ sets among $U_1,\dots,U_p$ (or all if $p \leq f$), take $f+1$ terminals from each chosen $U_i$ (or all if $|U_i| \leq f$).
    We've included the more efficient version to demonstrate that the $\poly(f, \log n)$ factors in~\Cref{claim:label-legnth} come from the $f$-FT labels of the $s$-$t$ variant (\Cref{thm:st-labels}).
    These have an $\Omega(f + \log n)$-bit lower bound bits~\cite{ParterPP24}, so even if the latter is matched, the contribution of the subset labels is lightweight.
\end{remark}

\paragraph{Query algorithm.}

The pseudo-code for the query algorithm is given in the following~\Cref{alg:query}, designed to return \emph{YES} if the query $F$ is a Steiner cut, and \emph{NO} otherwise.

\begin{algorithm}[H]
\caption{Answering a query $F \subseteq V$, $|F| \leq f$ from the labels $\{L(x) \mid x \in F\}$}\label{alg:query}
\begin{algorithmic}[1]
\State let $K := F \cap B$
\If{$K$ separates $U-F$ in $G$}\label{line:explicit}
    \Return \emph{YES}
\EndIf
\State let $S := \{v \in V-F \mid \text{$\ell^*(v)$ is stored in $L(x)$ of some $x \in F$}\}$ \label{line:S}
\State let $W := \{w \in S \mid \text{$\exists u \in U$ s.t.\ $u,w$ are connected in $G-F$} \}$ \label{line:W}
\If{$W \neq \emptyset$}
    \State choose an arbitrary $w^* \in W$
    \For{every $w \in W - \{w^*\}$}
        \If{$w^*,w$ are disconnected in $G-F$}\label{line:st-queries}
            \Return \emph{YES}
        \EndIf
    \EndFor
\EndIf
\State \Return \emph{NO}
\end{algorithmic}
\end{algorithm}

We now provide the implementation details:
\begin{itemize}
    \item Line~\ref{line:explicit} is executed using $\hat{\ell}(K)$ of~\Cref{lem:explicit-labels} (and the names of the vertices in $F$, given in their labels).
    If $K = F \cap B$ is non-empty, we can find $\hat{\ell}(K)$ in the label $L(x)$ of some $x \in F \cap B$.
    Otherwise, $\hat{\ell}(K)$ is stored in $L(x)$ for every $x \in F$.

    \item Line~\ref{line:W} is executed by using the $\bar{\ell}(\cdot)$-labels of the vertices in $S \cup F$ (which appear in their $\ell^*(\cdot)$-labels).
    For every $w \in S$, we use $\bar{\ell}(w)$ and $\{\bar{\ell}(x) \mid x\in F\}$ to decide if $w$ is connected to some $u \in U$ in $G-F$, which is possible by~\Cref{cor:is-conn-to-U-labels}.

    \item Line~\ref{line:st-queries} is executed using the $\ell(\cdot)$-labels of $w^*,w$ and $F$ (which appear in their $\ell^*(\cdot)$-labels), which enable to determine if $w^*,w$ are connected in $G-F$ by~\Cref{thm:st-labels}.
\end{itemize}

\paragraph{Correctness.}
We need to show that \emph{YES} is returned if and only if $F$ is a Steiner cut, namely $F$ separates $U$ in $G$.
The soundness direction (`only if') is trivial:

\begin{claim}\label{claim:soudness}
    If the query algorithm (\Cref{alg:query}) returns \emph{YES} then $F$ is a Steiner cut in $G$.
\end{claim}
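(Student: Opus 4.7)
The plan is to do a simple case analysis on the two lines of~\Cref{alg:query} that output \emph{YES}, and in each case exhibit two terminals in $U-F$ that are disconnected in $G-F$, which is exactly the definition of $F$ being a Steiner cut.

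First I would handle the early return at~\Cref{line:explicit}. There we have $K := F\cap B$ and we are told that $K$ separates $U-F$ in $G$, so by the definition of separation there exist two distinct vertices $u,u' \in (U-F)-K = U-F$ (the equality holds because $K\subseteq F$) that are disconnected in $G-K$. The key observation is monotonicity of disconnection under vertex deletion: since $K\subseteq F$, the graph $G-F$ is a vertex-subgraph of $G-K$, and $u,u'\notin F$, so $u,u'$ remain disconnected in $G-F$. Thus $F$ separates $U$.

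Next I would handle the return inside the inner \textbf{for}-loop. Here we have found $w,w^*\in W$ that are disconnected in $G-F$. By the definition of $W$ on~\Cref{line:W}, there exist terminals $u,u'\in U$ such that $u$ is connected to $w^*$ and $u'$ is connected to $w$ in $G-F$; in particular $u,u'\notin F$, so $u,u'\in U-F$. If $u$ and $u'$ were connected in $G-F$, then by transitivity of connectivity $w^*$ would be connected to $w$ in $G-F$, contradicting our assumption; in particular $u\neq u'$. Hence $u,u'$ are two distinct terminals in $U-F$ disconnected in $G-F$, so again $F$ is a Steiner cut.

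There is no real obstacle here: the claim is essentially a sanity check on the query algorithm, and the only subtlety is remembering to use monotonicity of disconnection (for the $K\subseteq F$ step) and to use the defining property of $W$ to transport disconnection of arbitrary vertices into disconnection of terminals. The completeness direction, where one must show that \emph{every} Steiner cut is detected, is the substantive claim and presumably appears next as~\Cref{claim:completness}.
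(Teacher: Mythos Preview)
Your proposal is correct and follows essentially the same two-case analysis as the paper's own proof; the only difference is that you spell out the monotonicity-of-disconnection step and the fact that $u,u'\notin F$ a bit more explicitly than the paper does.
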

\begin{proof}
    If \emph{YES} was returned in line~\ref{line:explicit},
    then $F$ has a subset $K$ which separates two terminals in $U-F$, so these are also separated by $F$. Thus, $F$ is a Steiner cut.

    Otherwise, \emph{YES} was returned in line~\ref{line:st-queries},
    so there are $w^*,w\in W$ disconnected in $G-F$.
    By definition of $W$ (line~\ref{line:W} of~\Cref{alg:label}), there exists terminals $u^*,u \in U$ such that $u^*,w^*$ and $u,w$ are connected pairs in $G-F$.
    Thus, $u^*,u$ are two terminals disconnected in $G-F$, so $F$ is a Steiner cut.
\end{proof}

Next, we address the completeness (`if') direction:
\begin{claim}\label{claim:completness}
    If $F$ is a Steiner cut in $G$, then the query algorithm (\Cref{alg:query}) returns \emph{YES}.
\end{claim}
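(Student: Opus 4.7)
The plan is to assume that line~\ref{line:explicit} did not return \emph{YES} --- equivalently, that $K := F \cap B$ does not separate $U-F$ in $G$, so all terminals in $U-F$ lie in a single connected component of $G-K$ --- and to show that line~\ref{line:st-queries} must then return \emph{YES}. The reduction step is this key claim: \emph{every connected component $C$ of $G-F$ that contains a terminal contains at least one vertex of $W$.} Given the claim, since $F$ is a Steiner cut there are at least two terminal-containing components of $G-F$, hence $W$ has representatives in two such components; the for-loop over $W - \{w^*\}$ is then guaranteed to hit some $w$ disconnected from $w^*$ in $G-F$, so the $\ell(\cdot)$-based $s$-$t$ query at line~\ref{line:st-queries} returns \emph{YES}.

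I would prove the key claim by case analysis on whether $C$ meets $B$. The easy case is $C \cap B \neq \emptyset$: any $b \in C \cap B \subseteq B-F$ appears in $S$ (every label stores all of $B$ via line~\ref{line:store-B}), and $b$ is connected in $G-F$ to any terminal of $C$, so $b \in W \cap C$. The harder case is $C \cap B = \emptyset$, where $C$ is contained in a single $G-B$-component $D$; here I would lean on the Duan-Pettie structure, noting that the terminals in $D$ all lie on a single tree $T_D$ of the Steiner forest $T-B$ (by \Cref{prop:Steiner} of \Cref{thm:Decomp}). From here I would split further on whether $T_D$ meets $F$.

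If $T_D \cap F \neq \emptyset$: walking in $T_D$ from a terminal $u \in C$ toward any vertex of $T_D \cap F$, I would take the last non-$F$ vertex $z$ on the walk; then $z$ has a $T-B$-neighbor $y \in F - B$, so $\ell^*(z)$ is stored in $L(y)$ via line~\ref{line:low-deg} (giving $z \in S$), and the $T_D$-portion from $u$ to $z$ avoids $F$ and consists of edges inside $V-B-F$, hence survives in $G-F$, so $z \in C$ and thus $z \in W$. The main obstacle is the remaining subcase, $T_D \cap F = \emptyset$. Here I would first argue $D \cap F \neq \emptyset$: otherwise $D \subseteq V-F$ is connected in $G-F$ and in fact equals $C$; but since $F$ is a Steiner cut, there is another terminal $u' \in U-F$ outside $C$, and the Case 2 hypothesis supplies a $G-K$-path from $u \in C$ to $u'$. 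Examining the first vertex $v$ on this path leaving $D$: since $D$ is a $G-B$-component, $v$ must lie in $B$; since the path avoids $K$, $v \in B - F$; and the edge between $v$ and its predecessor in $D$ survives in $G-F$, putting $v \in C \cap B$ --- contradicting Case B. Thus $D \cap F \neq \emptyset$, and for any $x \in D \cap F \subseteq F-B$, the stored $u_x$ is a terminal in $D$'s unique tree $U \cap D \subseteq T_D \subseteq C$, yielding $u_x \in S \cap C$ and $u_x \in W$, completing the claim and the proof.
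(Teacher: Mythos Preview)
Your proof is correct and follows essentially the same approach as the paper: both reduce to showing that each terminal-containing component $C$ of $G-F$ meets $W$, handle $C \cap B \neq \emptyset$ via the stored $B$-labels, and in the remaining case exploit the Duan--Pettie Steiner-forest property of $T-B$ together with the stored $u_x$ to locate a witness in $C \cap S$. The only difference is organizational---you split on whether $T_D$ meets $F$ before invoking the ``$K$ does not separate $U-F$'' hypothesis, whereas the paper first uses that hypothesis to find some $x \in N_G(C) \cap (F-B)$ and then walks the $T-B$-path from $u$ to $u_x$---but the ingredients are identical. (One small wording note: in your subcase $T_D \cap F \neq \emptyset$, ``last non-$F$ vertex on the walk'' should be read as the predecessor of the \emph{first} $F$-vertex, so that the $u$-to-$z$ segment indeed avoids $F$.)
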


\begin{proof}
    Let us first consider the trivial case when $K = F \cap B$ separates $U-F$ in $G$.
    Then~\Cref{alg:query} returns \emph{YES} in line~\ref{line:explicit} as needed.
    Henceforth, we assume that $F$ satisfies the following property:
    \begin{itemize}
        \item[]
        \begin{description}
            \item[(P1)] $F \cap B$ does not separate $U-F$ in $G$.\label{prop:K-not-separating}
        \end{description}
    \end{itemize}
    Note that given Property~(P1), the only possible place where~\Cref{alg:query} can return \emph{YES} is in line~\ref{line:st-queries}, so we aim to show that this is indeed the case.
    We will crucially use Property~(P1) later on in the proof.

    Let $u,u' \in U$ be two terminals disconnected in $G-F$, which exists as $F$ is a Steiner cut.
    We will show there exists $w,w' \in W$ such that (i) $u,w$ are connected in $G-F$ and (ii) $u',w'$ are connected in $G-F$.
    Let us argue why this suffices to complete the proof.
    First, this clearly implies that $W \neq \emptyset$.
    Second, one of $w,w'$ must be disconnected from $w^*$ in $G-F$, as otherwise $w,w'$ would be connected in $G-F$, so by (i) and (ii) we get that $u,u'$ are also connected in $G-F$, but this is a contradiction.
    Therefore, line~\ref{line:st-queries} of~\Cref{alg:query}, when executed with either $w$ or $w'$ must return \emph{YES} as required.

    We thus focus on showing the existence of $w \in W$ which is connected to $u$ in $G-F$ (and the proof for $w'$ and $u'$ is symmetric).
    Let $C_u$ be the connected component of $u$ in $G-F$.
    Recall the definitions of $S$ and its subset $W$ from lines~\ref{line:S} and~\ref{line:W} of~\Cref{alg:query}.
    Note that every vertex $w \in C_u \cap S$ is automatically in $W$, since it belongs to $S$ and is connected to the terminal $u$ in $G-F$.
    So, our goal is simply to show that $C_u \cap S \neq \emptyset$, or in other words, that the $\ell^*(\cdot)$-label of some vertex in $C_u$ is stored in some $L(x)$ of $x \in F$.

    Again, there is a trivial case: if $C_u \cap B \neq \emptyset$ then we are done, because $\{\ell^*(b) \mid b \in B\}$ is stored in every $L(x)$ of $x \in F$ (line~\ref{line:store-B} of~\Cref{alg:label}), so $B \subseteq S$.
    Thus, we assume that:
    \begin{itemize}
        \item[]
        \begin{description}
            \item[(P2)] The connected component $C_u$ of $u$ in $G-F$ does not intersect $B$.
        \end{description}
    \end{itemize}
    In fact, the inclusion of the $\ell^*(\cdot)$-labels of $B$ in every label $L(x)$ is precisely aimed at enforcing Property~(P2).

    Our next step is to ``capitalize on'' properties (P1), (P2), and the important property
    that $T-B$ is a Steiner forest for $U-B$ in $G-B$ (guaranteed by~\Cref{thm:Decomp}).
    Consider the set $N_G(C_u)$ of all vertices outside $C_u$ that are adjacent to $C_u$ in $G$.
    Note that, as $C_u$ is a connected component in $G-F$, we have $N_G (C_u) \subseteq F$.
    Observe that $N_G (C_u)$ separates $U-F$ in $G$, as $u,u' \in U-F$ with $u \in C_u$ and $u' \notin C_u$.
    Thus, by Property~(P1), it must be that $N_G (C_u) \not\subseteq F \cap B$.
    Hence, there exists some $x \in N_G (C_u) \cap (F-B)$.
    Now, using Property (P2), we see that all vertices in $C_u \cup \{x\}$ are connected in $G-B$, so $u$ is connected to $x$ in $G-B$.
    Recall that the terminal $u_x \in U$ is also (by definition) connected to $x$ in $G-B$.
    (Note that $u_x$ exists since $x$ is connected to the terminal $u$ in $G-B$.)
    Hence, $u,u_x$ are connected in $G-B$, and thus also in $T-B$.
    This is the crux for finishing the proof in the following argument. See~\Cref{fig:completness-proof} for an illustration.
    \begin{figure}[t]
    \centering
    \includegraphics[width=0.8\textwidth]{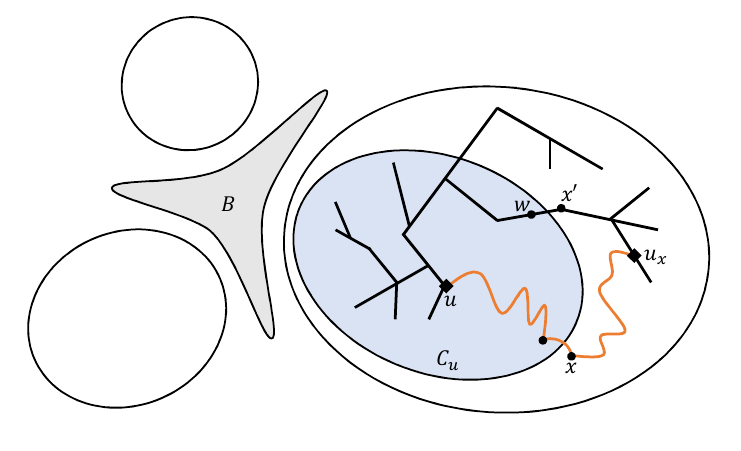}
    \caption{
        Illustrating the proof of~\Cref{claim:completness}.
        Given (P1) and (P2), the connected component $C_u$ of $u$ in $G-F$ is fully contained in some connected component of $G-B$, and $C_u$ has some neighbor $x \in F$ within this bigger component.
        Thus, both $u$ and the chosen terminal $u_x$ for $x$ are connected to $x$ in this component.
        The forest $T-B$ contains a tree that connects \emph{all terminals} in this component, so it must connect $u$ and $u_x$.
        If $u_x \notin C_u$ (as shown in the figure), then the tree must contain a vertex $w \in C_u$ which is a neighbor of some $x' \in F$ in $T-B$, and so $w \in S$.
        Otherwise, $u_x \in C_u \cap S$.
        In either case, we find some vertex in $C_u \cap S$ as required.
    }
    \label{fig:completness-proof}
    \end{figure}

    Observe that $u_x \in S$, since $\ell^*(u_x)$ is stored in $L(x)$ (line~\ref{line:store-ux} of~\Cref{alg:label}).
    So, if $u,u_x$ are connected in $G-F$, then $u_x \in C_u \cap S$ and we are done.
    Otherwise, the path in $T-B$ between $u$ and $u_x$ must go through $F$.
    Let $x' \in F$ be the vertex closest to $u$ on this path, and let $w$ be its neighbor in the direction of $u$ on the path.
    (It might be that $x' = x$.)
    Then the subpath between $u$ and $w$ avoids $F$, so $w \in C_u$.
    Also, $x' \notin B$ and $(x',w)$ is an edge of $T-B$, hence $\ell^*(w)$ is stored in $L(x')$ (line~\ref{line:low-deg} of~\Cref{alg:label}), so $w \in S$.
    Namely, we have shown that $C_u \cap S \neq \emptyset$, and the proof is concluded.    
\end{proof}

\Cref{thm:main} follows immediately from the combination of~\Cref{claim:label-legnth}, \Cref{claim:soudness} and~\Cref{claim:completness}.

\section*{Acknowledgments}

We are thankful to Merav Parter for encouraging this collaboration and for valuable discussions.

\bibliographystyle{alphaurl}
{\small \bibliography{references}}

\newpage
\appendix

\section{Warm-up scheme with $\tilde{O}(|U|^{1-1/2^{f-1}})$-bit labels}\label{sect:warm-up}

In this section we provide the details on the warm-up $f$-FT Steiner connectivity scheme with labels of $\tilde{O}(|U|^{1-1/2^{f-1}})$ bits, for an arbitrary terminal set $U \subseteq V$ (as in~\Cref{sect:overview}, $\tilde{O}(\cdot)$ hides $\poly(f,\log n)$ factors).

\paragraph{Steiner tree.}
As mentioned in~\Cref{sect:overview}, the extension from $U=V$ to the arbitrary terminals $U \subseteq V$ is by using a minimal Steiner tree that connects the terminals in $G$, so that the number high-deg vertices is proportional to the of terminals $|U|$ rather than to $n$.
This is by using the following fact:
\begin{fact}
    Let $r \geq 3$.
    Then any tree $T$ with $\ell$ leaves has at most $(\ell-2)/(r-2)$ vertices of degree at least $r$.
\end{fact}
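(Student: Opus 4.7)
The plan is to prove this standard tree-counting fact using a handshake argument, exploiting the fact that internal vertices of low degree force the tree to have more leaves.

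First I would set up notation. Let $n$ be the number of vertices in $T$, let $\ell$ be the number of leaves (degree-$1$ vertices), let $n_2$ be the number of degree-$2$ vertices, let $m$ be the number of vertices whose degree lies in $\{3,\dots,r-1\}$, and let $h$ be the number of vertices of degree at least $r$. Clearly $n = \ell + n_2 + m + h$, and the goal is to show $h \leq (\ell-2)/(r-2)$.

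Next I would invoke the handshake identity for a tree: $\sum_{v} \deg(v) = 2(n-1)$. Splitting the sum according to the four groups gives
\[
\ell + 2 n_2 + \sum_{v: \deg(v) \geq 3} \deg(v) \;=\; 2(\ell + n_2 + m + h) - 2.
\]
Rearranging isolates the contribution of the high-degree part:
\[
\sum_{v: \deg(v) \geq 3} \deg(v) \;=\; \ell + 2m + 2h - 2.
\]
On the other hand, each degree-$\geq 3$ vertex counted in $m$ contributes at least $3$, and each vertex counted in $h$ contributes at least $r$, so the left-hand side is at least $3m + rh$. Combining these yields $3m + rh \leq \ell + 2m + 2h - 2$, i.e., $m + (r-2)h \leq \ell - 2$.

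Finally, since $r \geq 3$ and $m \geq 0$, dropping the nonnegative term $m$ gives $(r-2)h \leq \ell - 2$, so $h \leq (\ell-2)/(r-2)$, as claimed. There is no real obstacle here: the argument is a one-step application of the handshake lemma together with the obvious lower bounds $\deg(v) \geq 3$ for middle vertices and $\deg(v) \geq r$ for high-degree vertices. The only thing to be careful about is that when $r=3$ the bound still makes sense (it becomes $h \leq \ell - 2$, which is tight for a star).
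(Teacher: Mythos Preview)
Your proof is correct and is essentially the same handshake-lemma argument as the paper's: the paper sets $k=n_2$, $p=m+h$, $q=h$, bounds the degree sum below by $\ell+2k+2p+(r-2)q$ (a slight weakening of your $3m+rh$ bound), and rearranges to get $q\leq(\ell-2)/(r-2)$. The only cosmetic difference is that you partition the degree-$\geq 3$ vertices into two disjoint classes while the paper uses nested counts, but the computation is the same.
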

\begin{proof}  
    Let $k$, $p$ and $q$ be the number of vertices with degree $=2$, $\geq 3$ and $\geq r$, respectively.
    Let $s$ be the sum of degrees in $T$.
    On the one hand $s= 2(\ell+k+p -1)$, but on the other $s \geq \ell +2k + 2p + (r-2)q$.
    Rearranging gives $q \leq (\ell-2)(r-2)$. 
\end{proof}

We take a minimal Steiner tree $T$ for $U$ (so its leaves are terminals), and $r \geq 3$ a degree threshold to be set later.
Let $B$ be the set of vertices with degree $\geq r$ in $T$, so $|B| \leq O(|U|/r)$.

\paragraph{Labeling.}
We will the basic tools we need the basic tools from~\Cref{sect:basic-tools}:
the $\ell(\cdot)$-labels $f$-FT labels for $s$-$t$ connectivity of~\Cref{thm:st-labels}, and their extension to $\bar{\ell}(\cdot)$-labels for determining connectivity of a specific vertex to the terminal set in~\Cref{cor:is-conn-to-U-labels}.

As described in~\Cref{sect:overview}, the scheme is recursive.
We use $L_{k} (\cdot, H)$ to denote the labels resulting from it when applied on a subgraph $H$ and fault-parameter $k$ (the terminal set in $H$ is always defined to be $U \cap V(H)$).
We construct the labels $L_f (\cdot, G)$ as follows:

\begin{algorithm}[H]
\caption{Creating the label $L_f(x, G)$ of a vertex $x \in V$}\label{alg : simple label}
\begin{algorithmic}[1]
\State \textbf{store} $\ell(x)$ and $\bar{\ell}(x)$
\For{each $y\in B-\{x\}$}
    \State \textbf{store} $L_{f-1}(x, G-\{y\})$
\EndFor
\If{$x\in V-B$}
    \For{every neighbor $w$ of $x$ in $T$}
        \State \textbf{store} $\ell(w)$ and $\bar{\ell}(w)$
    \EndFor
\EndIf
\end{algorithmic}
\end{algorithm}

\paragraph{Label length.}
Let $b_k$ denote a bound on the bit-length of a label constructed by this scheme with fault parameter $k$ (on any graph with $n$ or less vertices).
As $\ell(\cdot)$-labels and $\bar{\ell}(\cdot)$-labels have $\tilde{O}(1)$ bits, we see that $L_f (x,G)$ has length at most $\tilde{O}(r + b_{f-1}\cdot|B|) = \tilde{O}(r + b_{f-1} \cdot |U|/r)$ bits.
Setting the threshold $r$ to make both terms equals now gives $b_f = \tilde{O}(\sqrt{b_{f-1} |U|})$.
The base case is $b_1 = 1$ (as $L_1 (x, H)$ just needs one bit to specify if there is a pair of disconnected terminals in $H-\{x\}$), and solving the recursion yields $b_f = \tilde{O}(|U|^{1-1/2^{f-1}})$.

\paragraph{Query algorithm.}
Let $F$ be the given query.
\begin{itemize}
    \item \textbf{Case \emph{one-high-deg}:}
    If there is some high-deg $y \in F \cap B$, then every $x \in F-\{y\}$ has stored $L_{f-1} (x, G-\{y\})$ in its label, so we can determine if $F-\{y\}$ is a Steiner cut in $G-\{y\}$, which is equivalent to $F$ being a Steiner cut in $G$.

    \item \textbf{Case \emph{all-low-deg}:}
    If $F \cap B = \emptyset$, then the labels of $F$ store the $\ell(\cdot)$-labels and $\ell(\cdot)$-labels of every $T$-neighbor of $F$.
    Let $W$ be the set of those neighbors.
    Using the $\ell(\cdot)$-labels and $\bar{\ell}(\cdot)$-labels of $W \cup F$, we check for each pair $w,w' \in S$ if (1) $w,w'$ are disconnected in $G -F$ and (2) each of $w,w'$ is connected to some terminal from $U$ in $G-F$.
    We determine that $F$ is a Steiner cut if and only if we find a pair satisfying (1) and (2).

    The correctness of the `if' direction is immediate, as $w,w'$ satisfying (1) and (2) clearly imply a separated pair of terminals in $G-F$.
    For the converse direction, suppose $F$ is a Steiner cut.
    Let $u,u' \in U$ be disconnected in $G-F$,
    and let $T_u$ and $T_{u'}$ be the connected components of $T-F$ containing $u$ and $u'$ respectively.
    Then both $T_u, T_{u'}$ must contain some neighbor of $F$, denoted $w,w'$ respectively.
    Thus $w,u$ and $w',u'$ are connected pairs in $G-F$, hence (2) holds, and (1) then follows since $u,u'$ are disconnected in $G-F$.
\end{itemize}

\end{document}